\theoremstyle{plain}
\newtheorem{prop}{Proposition}
\theoremstyle{remark}
\newtheorem{remark}{Remark}
\newcommand{\R}{\mathbb{R}}
\newcommand{\Meff}{M_{\text{eff}}}
\newcommand{\A}{\mathscr{A}}
\def \PP{\mathcal{P}}
\DeclareMathOperator{\Id}{Id}
\DeclareMathOperator*{\argmin}{argmin}
\title{Prediction of residue-residue contacts from protein families using similarity kernels and least squares regularization}
\author[1]{Massimo Andreatta}
\author[1]{Santiago Laplagne}
\author[2]{Shuai Cheng Li}
\author[1]{Stephen Smale}
\affil[1]{\footnotesize{Department of Mathematics, City University of Hong Kong, Kowloon, Hong Kong}}
\affil[2]{\footnotesize{Department of Computer Science, City University of Hong Kong, Kowloon, Hong Kong}}
\date{March 26, 2014}
\begin{document}

%\author[Andreatta \textit{et~al}]{Massimo Andreatta\,$^{1*}$, Santiago Laplagne\,$^{1}$\footnote{These authors contributed equally to this work.}\ , Shuai Cheng Li\,$^{2}$ and Stephen Smale\,$^{1}$\footnote{To whom correspondence should be addressed.}}
%\address{$^{1}$Department of Mathematics, City University of Hong Kong, Kowloon, Hong Kong\\
%$^{2}$Department of Computer Science, City University of Hong Kong, Kowloon, Hong Kong}
%\footnote{to whom correspondence should be addressed}
%\history{Received on XXXXX; revised on XXXXX; accepted on XXXXX}

%\editor{Associate Editor: XXXXXXX}

\maketitle

\begin{abstract}

One of the most challenging and long-standing problems in computational biology is the prediction of three-dimensional protein structure from amino acid sequence. 
A promising approach to infer spatial proximity between residues is the study of evolutionary covariance from multiple sequence alignments, 
especially in light of recent algorithmic improvements and the fast growing size of sequence databases. 
In this paper, we present a simple, fast and accurate algorithm for the prediction of residue-residue contacts based on regularized least squares.
The basic assumption is that spatially proximal residues in a protein coevolve to maintain the physicochemical complementarity of the amino acids involved in the contact.
Our regularized inversion of the sample covariance matrix allows the computation of partial correlations between pairs of residues, thereby removing the effect of spurious transitive correlations.
The method also accounts for low number of observations by means of a regularization parameter that depends on the effective number of sequences in the alignment. 
When tested on a set of protein families from Pfam, we found the RLS algorithm to have performance comparable to state-of-the-art methods for contact prediction, while at the same time being faster and conceptually simpler.

The source code and data sets are available at \url{http://cms.dm.uba.ar/Members/slaplagn/software}

\end{abstract}

\section{Introduction}

\label{section:introduction}

A major problem in computational biology is the prediction of the 3D structure of a protein from its amino acid sequence. 
Anfinsen's dogma suggests that, in principle, the amino acid sequence contains enough information to determine the full three-dimensional structure \cite{anfinsen1973principles}. 
However, a few decades on, the mechanisms of protein folding are still not satisfactorily explained \cite{dill2012protein}. 
In particular, the space of possible spatial configurations given a certain amino acid 1D sequence is immense (the ``Levinthal paradox''), yet an unfolded polypeptide chain is driven to its native 3D structure in a finite time, typically milliseconds to seconds, upon shifting to folding conditions \cite{rose2006backbone}.

Such enormous search space poses important challenges to the development of \textit{ab initio} methods for structure prediction. 
Therefore, it is essential to exploit different kinds of information that can help reduce the degrees of freedom in the configurational search space. 
A powerful way of inferring distance constraints is the prediction of residue-residue contacts from multiple sequence alignments (MSA). 
The underlying assumption is that contacting residues coevolve to maintain the physicochemical complementarity of the amino acids involved in the contact. 
That is, if a mutation occurs in one of the contacting residues, the other one is also likely to mutate, lest the fold of the protein may be disrupted. 
Methods based on residue coevolution aim at inferring spatial proximity between residues (contacts) from such signals of correlated mutations (Figure \ref{fig:contacts}).

\begin{figure}[!b]
\centering
  \includegraphics[width=0.45\textwidth]{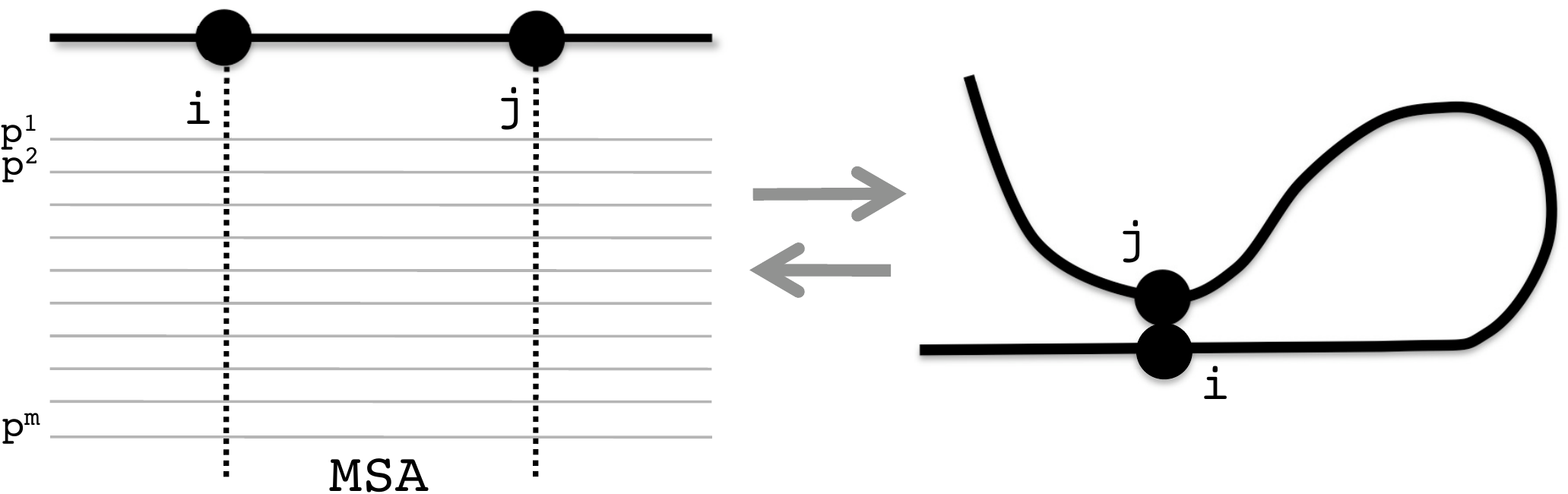}
  \caption{Illustration of a residue-residue contact. The contact imposes a constraint on the evolution of residues $i$ and $j$. Vice versa, coevolution of $i$ and $j$ can be used to infer their physical proximity.}
   \label{fig:contacts}
\end{figure}

Thanks to the recent exponential growth in sequence data collected in databases such as Pfam \cite{PFAM}, algorithms for the prediction of contacting residues from MSA have enjoyed increasing attention. 
Different kinds of approaches have been recently applied for contact prediction, from mutual information (MI) between pairs of positions \cite{buslje2009correction, dunn2008mutual, wang2013predicting}, to Bayesian network models \cite{burger2010disentangling}, direct-coupling analysis \cite{balakrishnan2011learning, morcos2011direct, marks2011protein} and sparse inverse covariance matrix estimation \cite{jones2012psicov}. 
See also \cite{marks2012protein} and \cite{de2013emerging} for recent reviews.
In particular, the more sophisticated and successful methods attempt to disentangle direct and indirect correlations, that is the artifactual correlations emerging from transitive effects of covariance analysis \cite{lapedes1999correlated, weigt2009identification}. 
Morcos et al.\ \cite{morcos2011direct} and Marks et al.\ \cite{marks2011protein} tackle this problem using a maximum-entropy approach, whereas Jones et al.\ \cite{jones2012psicov} estimate partial correlations by inverting the covariance matrix. A very recent pseudo-likelihood method based on 21-state Potts models \cite{ekeberg2013improved} was shown to outperform other approaches for direct-coupling analysis. Kamisetty et al.\ \cite{kamisetty2013assessing} systematically analyzed the conditions under which predicted contacts are likely to be useful for structure prediction, and found several hundred families that meet their criteria.

Here, we propose a new approach for computing direct correlations that employs regularized least squares (RLS) regression to invert a sample covariance matrix $S$. 
We compute the regularized inverse by the formula
\begin{equation}
\label{eq:theta}
 \Theta = (S^2 + \eta \Id)^{-1} S,
\end{equation}
with fixed $\eta > 0$. It proves to be a very simple, direct and fast approach, and requires no assumption on probabilities distributions or sparsity in the correlations.

The RLS algorithm described in this paper was applied to three different sets of protein families, and we compared its performance to state-of-the-art methods for contact prediction. The RLS method achieves precision rates superior to PSICOV \cite{jones2012psicov} and comparable to plmDCA \cite{ekeberg2013improved} but it is considerably faster than either. 

\section{Approach}

\subsection{The covariance matrix} 
\label{section:covariance}

Let $\A$ be the set of $20$ amino acids plus the gap symbol $-$ and $\PP = \{p^m = (p_1^m, \dots, p_L^m)\}_{m=1,\dots,M}$ a given Pfam family of $M$ aligned protein sequences, possibly with gaps, where $L$ denotes the length of the protein domains. On this set of proteins, the covariance between any pair of columns $(i,j)$ for the amino acids pair $(a,b)$ is given by
$$
	S^0_{ij}(a,b) = f_{ij}(a,b) - f_i(a) f_j(b)
$$
where the corrected frequencies are calculated as
\begin{equation}
\label{eq:pseudo}
	f_i (a) = \frac{1}{\lambda + M_{\text{eff}}} \Big( \frac{\lambda}{21} + \sum_p w(p) \delta(a, p_i) \Big)
\end{equation}
\[
	f_{ij} (a,b) = \frac{1}{\lambda + M_{\text{eff}}} \Big( \frac{\lambda}{21^2} + \sum_p w(p) \delta(a, p_i) \delta(b, p_j) \Big)
\]
The delta kernel takes value $\delta(a,b)=1$ if $a = b$ and  $\delta(a,b)=0$ otherwise. $w(p)$ is the weight of protein $p$ and $\Meff = \sum_p w(p)$ (see section \ref{section:measure} for details on sequence weighting).

The parameter $\lambda$ is the so-called pseudocount, a regularization parameter that accounts for non-observed pairs. We note that the same, or similar, constructions for the corrected amino acid frequencies have been proposed previously by other authors \cite{ekeberg2013improved, jones2012psicov, morcos2011direct}.

\subsubsection{Modified covariance matrix}

We set $S^0_{ii}(a,b) = 0$ for $a \neq b$, and call $S$ this new matrix. This modification also appears in the code of PSICOV \cite{jones2012psicov} although it is not stated in their paper. By setting those values to $0$, the resulting matrix contains in general negative eigenvalues (see Figures \eqref{fig:eig_0_00028} and \eqref{fig:eig_2_00028}) and hence is not anymore semi-definite positive, but it is still symmetric. We do not fully understand this step, but it is noteworthy that Equation \ref{eq:theta} still makes sense for any $\eta > 0$.

In general, working with $S$ instead of $S^0$ gives better results in our experiments. See Table S1 for the effect of this step on predictive performance.

\subsection{Regularized inverse -- the key algorithm} As we mentioned in the Introduction, the covariance between our random variables does not distinguish between direct and indirect correlations. 
To overcome this problem, a technique used by statisticians is to compute the so-called partial correlations, which can be obtained from the inverse of the covariance matrix using its associated correlation matrix.

Since the covariance matrix is usually singular or ill conditioned, regularization techniques must be used to compute a regularized inverse $\Theta$.  We achieve this by solving the following optimization problem
\begin{equation}
\label{eq:optimization}
 \Theta = \argmin_{X \in \R^{20L \times 20L}} \| SX - \Id \|_2^2 + \eta \|X\|_2^2,
\end{equation}
where $\| \cdot \|_2$ denotes the Frobenius norm, and $\eta$ is a regularization parameter to be determined. Observe that the first term is minimized by the inverse of $S$ when it exists.

The problem has a unique solution for any $\eta > 0$ as we see in the next proposition.

\begin{prop}
\label{prop:regularization}
For a symmetric matrix $S \in \R^{n \times n}$ and a regularization parameter $\eta > 0$, the optimization problem \eqref{eq:optimization}
has a unique solution, which is also symmetric and given by equation \ref{eq:theta}. When $S$ is semidefinite positive, then the solution also is.
\end{prop}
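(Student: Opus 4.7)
The plan is to recognize the problem as a standard Tikhonov-regularized least squares problem in matrix form and exploit strong convexity for uniqueness, first-order optimality for the explicit form, and the symmetry of $S$ to conclude the rest.

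First I would write the objective $F(X) = \|SX - \Id\|_2^2 + \eta\|X\|_2^2$ as a trace, $F(X) = \operatorname{tr}((SX-\Id)^T(SX-\Id)) + \eta\operatorname{tr}(X^T X)$, and compute the gradient $\nabla F(X) = 2S^T(SX-\Id) + 2\eta X = 2(S^T S + \eta \Id)X - 2 S^T$. Setting the gradient to zero gives the normal equations $(S^T S + \eta \Id) X = S^T$. Since $S$ is symmetric this becomes $(S^2 + \eta\Id) X = S$. Because $\eta > 0$ and $S^2$ is positive semidefinite, the matrix $S^2 + \eta\Id$ has all eigenvalues at least $\eta$, hence is invertible, giving $\Theta = (S^2 + \eta \Id)^{-1} S$, which is Equation \eqref{eq:theta}.

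For uniqueness I would note that $F$ is strongly convex: its Hessian (viewing $X$ as a vector in $\R^{n^2}$) is $2((S^T S + \eta\Id) \otimes \Id)$, which is positive definite for $\eta > 0$. Therefore any critical point is the unique global minimizer, so $\Theta$ as above is the only solution. For symmetry, the key observation is that $S^2 + \eta\Id$ commutes with $S$ (both are polynomials in $S$), so its inverse also commutes with $S$. Hence $\Theta^T = S^T((S^2+\eta\Id)^{-1})^T = S(S^2+\eta\Id)^{-1} = (S^2+\eta\Id)^{-1} S = \Theta$.

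For the positive-semidefinite case I would diagonalize $S = Q\Lambda Q^T$ with $\Lambda = \operatorname{diag}(\lambda_1,\dots,\lambda_n)$ and $\lambda_i \ge 0$. Then $\Theta = Q\,\operatorname{diag}\bigl(\lambda_i/(\lambda_i^2+\eta)\bigr)\,Q^T$, and each diagonal entry is nonnegative, so $\Theta$ is semidefinite positive. I do not expect any real obstacle here: the only subtle point is justifying symmetry of $\Theta$, and this hinges entirely on the commutation $S \cdot (S^2+\eta\Id)^{-1} = (S^2+\eta\Id)^{-1} \cdot S$, which follows immediately from functional calculus on the symmetric matrix $S$.
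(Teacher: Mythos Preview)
Your proof is correct and reaches the same conclusions as the paper, but the route to the normal equations differs. The paper exploits the fact that the Frobenius norm decouples columnwise, reducing the matrix problem to $n$ independent vector Tikhonov problems $\Theta^{(i)} = \argmin_x \|S^T x - e^{(i)}\|_2^2 + \eta\|x\|_2^2$ and then quoting the standard closed-form solution for each. You instead treat the objective globally as a trace, compute the matrix gradient directly, and invoke strong convexity for uniqueness. Both arrive at $(S^2+\eta\Id)X = S$; the paper's decoupling is a bit more elementary (it avoids matrix calculus and Hessians by citing a classical scalar result), while your approach is more self-contained. Your justification of symmetry via the commutation of $S$ with $(S^2+\eta\Id)^{-1}$ is also more explicit than the paper's, which simply appeals to ``standard matrix theory.'' For the semidefinite part both proofs coincide: diagonalize $S$ and observe that the eigenvalue map $\lambda\mapsto \lambda/(\lambda^2+\eta)$ preserves sign.
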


\begin{proof}
Since the norms involved are coordinate norms, the problem can be decoupled into independent problems for each column of $X$:
\[
\Theta^{(i)} = \argmin_{x \in \R^{n \times 1}} \| S^tx - e^{(i)} \|_2^2 + \eta \|x\|_2^2,
\]
where $\Theta^{(i)}$ is the $i$-th column of $\Theta$ and $e^{(i)}$ is the $i$-th column of the identity matrix.

This is a well studied problem known as regularized least squares (also called Tikhonov regularization or Ridge regression in different areas, see \cite{tikhonov1943stability} and \cite{hoerl1962application}).
The unique solution is $\Theta^{(i)} = (S^tS + \eta \Id)^{-1} S^t e^{(i)}$.

Hence, the solution to our matrix problem is $\Theta = (S^tS + \eta \Id)^{-1} S^t$. Since we are assuming $S$ symmetric, we get
\[
 \Theta = (S^2 + \eta \Id)^{-1} S.
\]

The matrix $S$ is diagonalizable with all of its eigenvalues real. The eigenvalues of $S$ are transformed by the same formula defining $\Theta$. 
If $\lambda_k$, $1 \le k \le 20L$, are the eigenvalues of $S$ then the eigenvalues of $\Theta$ will be
\[
\gamma_k = f(\lambda_k) = \frac{\lambda_k}{\lambda_k^2 + \eta}
\]
This function is well defined for all $\lambda \in \R$ when $\eta$ is positive, which proves that the matrix $S^2 + \eta \Id$ is invertible. The resulting matrix $\Theta$ is symmetric by standard matrix theory.
Finally, $f$ preserves the sign of the eigenvalue and hence $\Theta$ will be a semidefinite positive matrix whenever $S$ is.
\end{proof}

\begin{remark}
 Note that $\Theta$ can be computed by solving the linear system $(S^2 + \eta \Id) \Theta = S$, which is faster and more 
accurate than inverting the matrix $S^2 + \eta \Id$.
\end{remark}

For a better understanding of our regularization formula, we study the function $f$ in more detail. The derivative of $f$ is 
$f'(\lambda) = \frac{-\lambda^2 + \eta}{(\lambda^2+\eta)^2}$.
Hence $f$ is increasing for $|\lambda| < \sqrt{\eta}$ and decreasing for $|\lambda| > \sqrt{\eta}$, with maximum value at $\lambda = \sqrt{\eta}$ and minimum value at $\lambda = -\sqrt{\eta}$.
We show in Figure \eqref{fig:functionPlot} the plot of this function for $\eta = \eta'/\Meff = 1000/3912$ (see Section \ref{subsec:regularization} for the choice of $\eta$).

As mentioned in the proof of Proposition \ref{prop:regularization}, the function is smooth at $0$, so using this regularization formula we deal in a simple way with the conditioning problem of inverting the covariance matrix.

\begin{figure*}[!b]
\centering
\subfloat[Function $f(\lambda) = \frac{\lambda}{\lambda^2 + \eta}$, $\eta = \eta'/\Meff = 1000/3912$]{
  \includegraphics[width=0.45\textwidth]{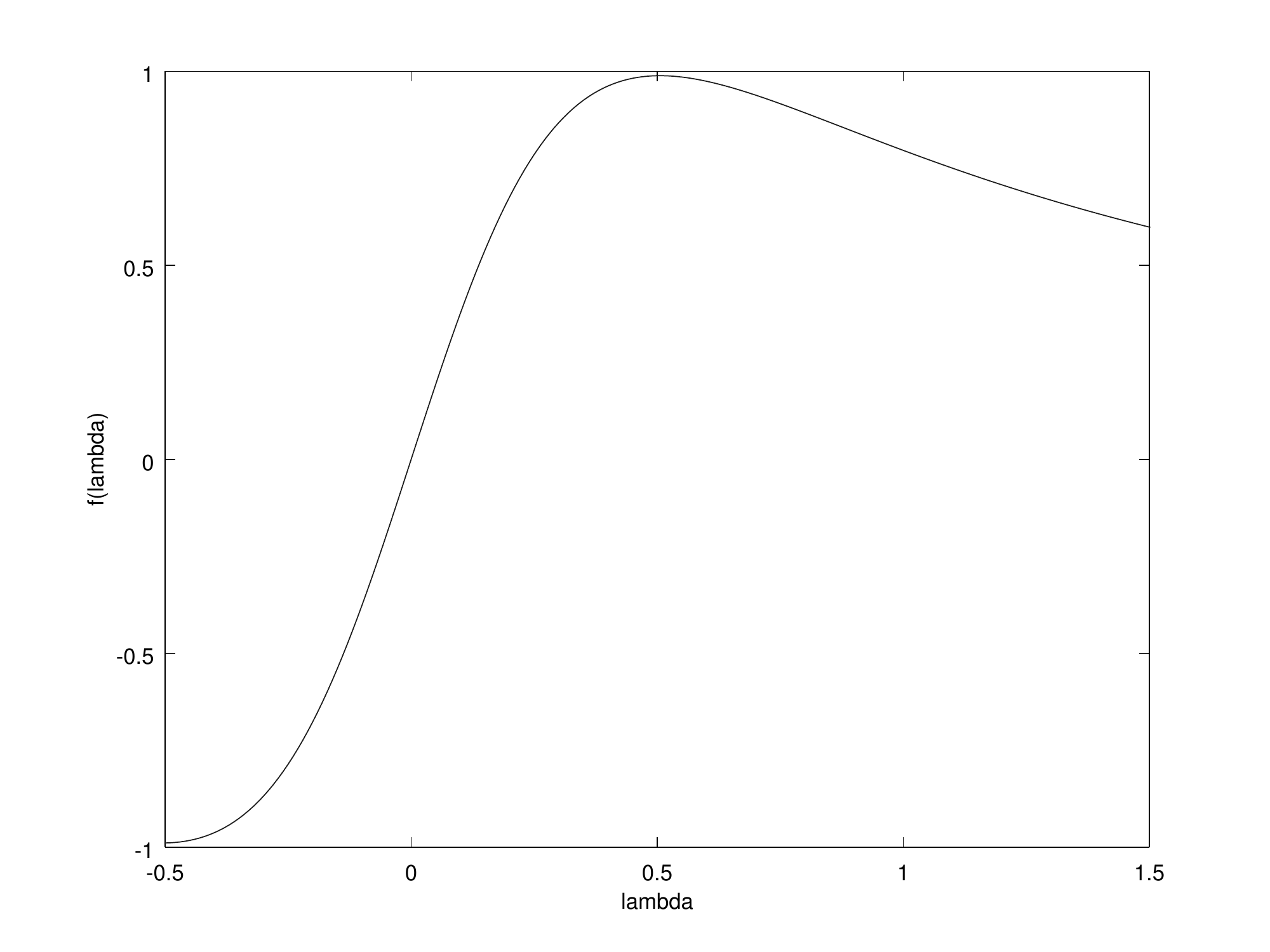}
  \label{fig:functionPlot}
}
\quad
\subfloat[Distribution of eigenvalues of the covariance matrix $S_0$]{
\includegraphics[width=0.45\textwidth]{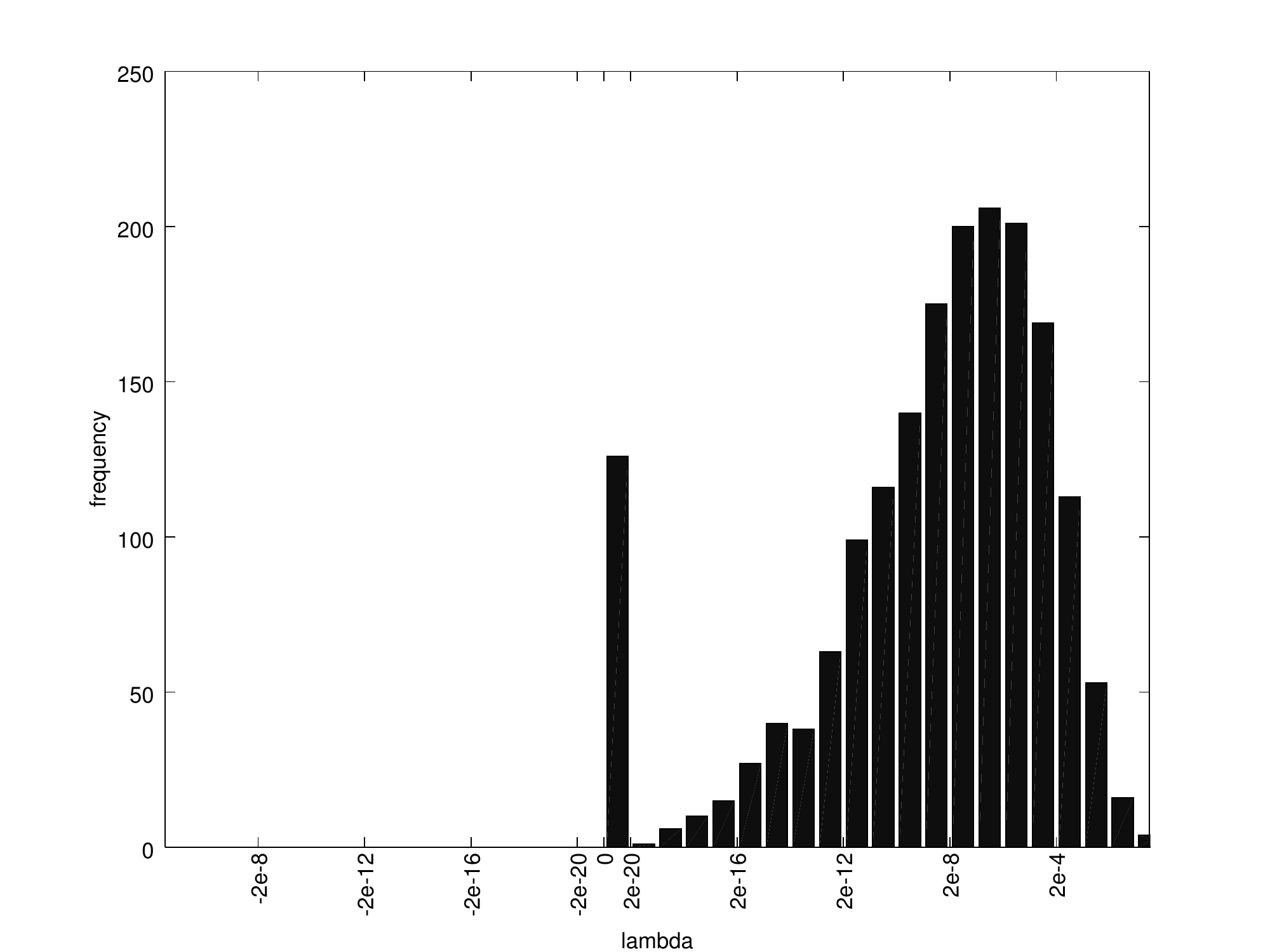}
\label{fig:eig_0_00028}
}
\hspace{0mm}
\subfloat[Relation between the eigenvalues of the modified covariance matrix $S$ and its regularized inverse $\Theta$]{
\includegraphics[width=0.45\textwidth]{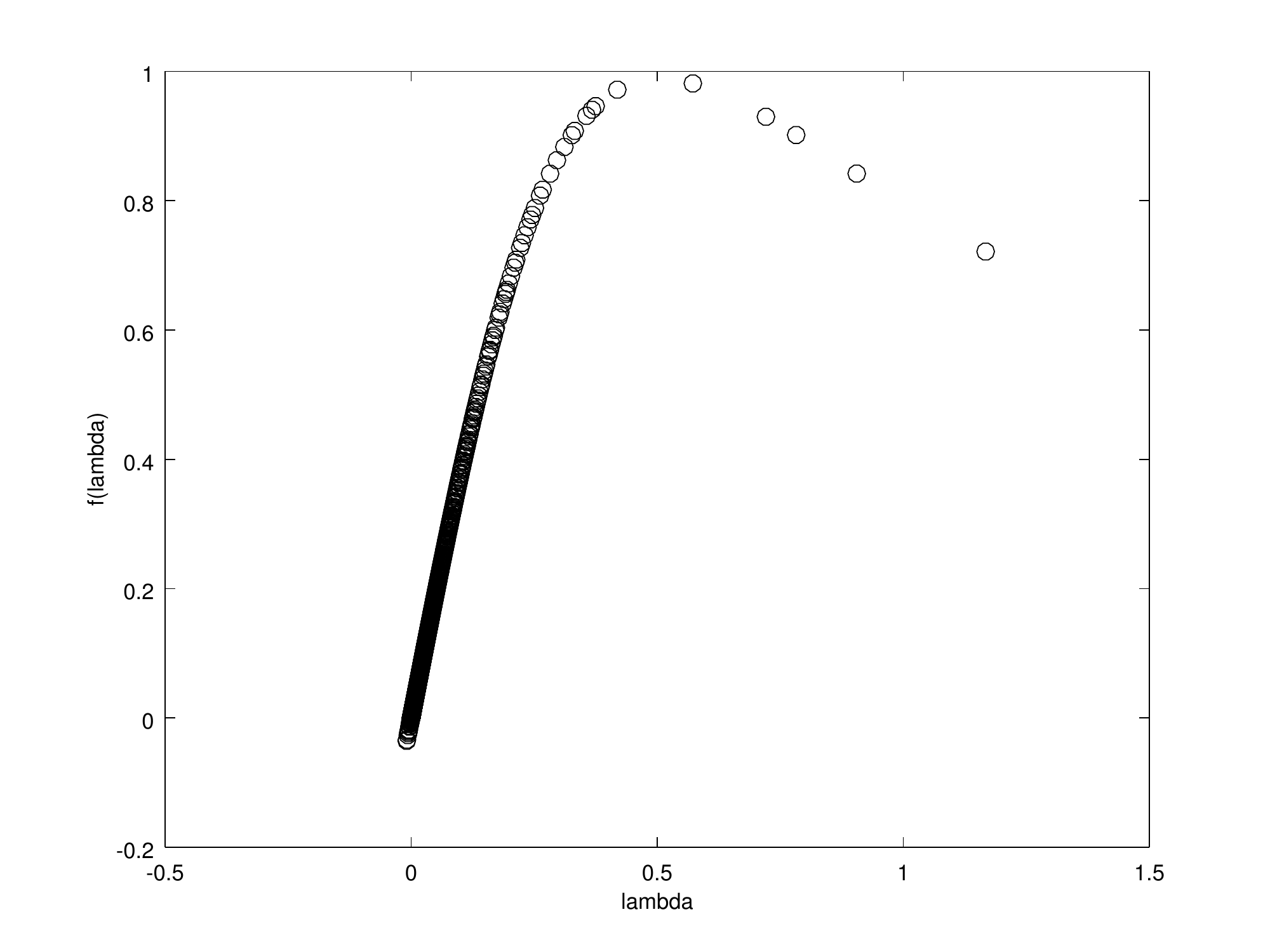}
\label{fig:transf_2_00028}
}
\quad
\subfloat[Distribution of eigenvalues of the modified covariance matrix $S$]{
\includegraphics[width=0.45\textwidth]{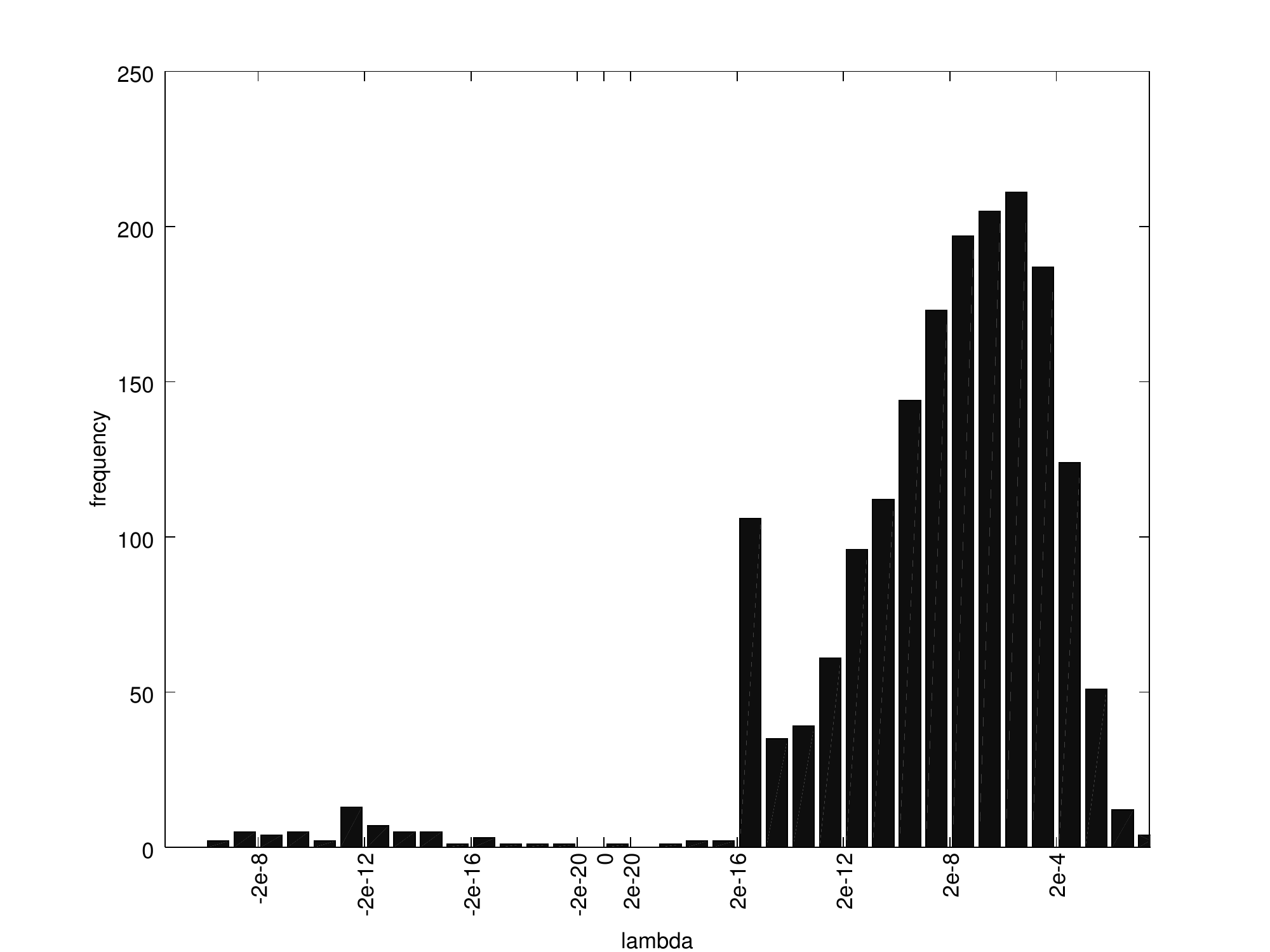}
\label{fig:eig_2_00028}
}
\caption{Distribution of eigenvalues of the covariance matrix and its regularized inverse for PFAM family PF00028.}
\end{figure*}

\subsection{Aggregation} The matrix $\Theta$ obtained is a $20L \times 20L$ matrix. Its entries are estimates of the partial correlation between pairs of columns $(i,j)$ for {\it all} pairs of amino acids $(a,b)$. Since our goal is to detect relations between pairs of columns in the alignment, we compute a coupling score aggregating the values of $\Theta$ using the $l_1$-norm on the $20 \times 20$ sub-matrices, as in \cite{jones2012psicov}. That is, 
\[
 P(i, j) = \sum_{1 \le a, b \le 20} | \Theta_{ij}(a, b) |.
\]

The $l_2$-norm for aggregation showed poorer performance than the $l_1$-norm above. Finally, following\cite{dunn2008mutual} and \cite{jones2012psicov} we define a corrected score $P_{\text{APC}}(i, j) = P(i, j) - \frac{P(\cdot, j)P(i, \cdot)}{P(\cdot, \cdot)}$, where $\cdot$ stands for the average over all positions.

The prediction of contacts between pairs of residues can now be obtained by ranking the $P_{\text{APC}}(i, j)$, where higher scores identify more likely residue-residue contacts. 

%%%%%%%%%%METHODS %%%%%%%%%%%%%%%%%%
%\begin{methods}
\section{Method details}

In this section we give more details on the actual implementation of the algorithm described above. 

\subsection{Sequence weighting}
\label{section:measure}

Families from the Pfam database contain some degree of redundancy. A common strategy to overcome this problem is sequence weighting, which weighs down groups of similar sequences and assigns higher weights to isolated sequences.

We first define a similarity measure between proteins, following \cite{smale2013introduction}. We start from the BLOSUM90 frequency substitution matrix $B_{90}(a,b)$ defined in \cite{henikoff1992amino} and call $\hat{B}_{90}(a,b)$ for a pair of amino acids $(a,b)$ the normalized matrix
\[
      \hat{B}_{90}(a,b) = \frac{B_{90}(a,b)}{\sqrt{B_{90}(a,a) B_{90}(b,b) }}
\]

We then proceed to construct a similarity kernel between pairs of proteins
\[
 K^3(p, q) = \sum_{k=1}^{10} \left(\sum_{i=1}^{L-k+1} K^2\left((p_i\dots p_{i+k-1}), (q_i\dots q_{i+k-1})\right)\right)
\]
where
\[
 K^2\left((p_i\dots p_{i+k-1}), (q_i\dots q_{i+k-1})\right) = \prod_{j=1}^k  \hat{B}_{90}(p_{i+j-1}, q_{i+j-1}),
\]
for $p, q \in \PP$, $1 \le k \le L$ and $1 \le i \le L-k+1$;

The normalized version of $K^3$ is obtained using
\[
 \hat K^3(p, q) = \frac{K^3(p, q)}{\sqrt{K^3(p, p)K^3(q,q)}}.
\]

Note that, since Pfam families consist of pre-aligned sequences, our $K^3$ kernel definition differs slightly from \cite{smale2013introduction} as it only compares aligned amino acid $k$-mers. Also, we limit the $k$-mers considered in the construction of $K^3$ to lengths smaller or equal to $10$. This implies a substantial improvement in computation time, with no significant loss in predictive power.

We fix a threshold $\theta$ (in this paper, $\theta = 0.7$) and for any protein $p \in \PP$ we define the equilibrium measure 
\[
\pi(p) = \sum_{\substack{q \in \PP \\ \hat K^3(p, q) > (1-\theta)}} \hat K^3(p, q).
\]

The weight of a protein $p$ is then defined as the reciprocal of the equilibrium measure $w(p) = (\pi(p))^{-1}$ , and the effective number of sequences in the alignment is $M_{\text{eff}} = \sum_p w(p)$. The kernel $ \hat K^3(p, q)$ is a measure of similarity between pairs of proteins, and for a given protein $p$ the quantity $\pi(p)$ effectively counts the number of sequences with similarity larger than a threshold $1 -\theta$, thereby weighing down sequences that are over-represented in the data set.

\paragraph{Hamming distance weighting}
As a term of comparison, we also applied a more traditional sequence weighting scheme based on the hamming distance between pairs of sequences. For each sequence $p$, we count the number of other sequences in the alignment that share more than $\theta$\% sequence identity with $p$
$$
	m^p = | \{ b \in \{ 1, \dots, M \} : \text{\%id}(p,b) > \theta \} |
$$
and then assign a weight $w(p) = 1 / m^p$ to sequence $p$.

This approach was used previously by several authors such as \cite{morcos2011direct}, \cite{jones2012psicov} and \cite{ekeberg2013improved}, but we note that these authors use different values for the theshold $\theta$ (respectively 0.8, 0.62 and 0.9). In this paper we choose $\theta=0.62$. See See Tables S2-S3 for the optimization of the two weighting schemes with respect to the regularization parameter $\eta'$, and Table S4 for a comparison of their performance.

\subsection{Regularization parameter}
\label{subsec:regularization}
The matrix inversion in Equation \ref{eq:theta} contains a regularization parameter $\eta$. We observed that families containing few sequences, where the number of sequences $M$ is comparable in size to the number of random variables ($20L$) require a larger regularization parameter compared to bigger families ($M \gg 20L$). We use then a regularization parameter of the form $\eta = \eta' / \Meff$, where $\Meff$ is the effective number of sequences defined above.

We tried different values of $\eta'$ over the 15 families from \cite{marks2011protein}, and observed that in general roughly the same $\eta'$ appears to be optimal across families with different $\Meff$. Thus the normalization $\eta = \eta' / \Meff$ appears appropriate.

In Figure \eqref{fig:transf_2_00028} we show how the actual eigenvalues of the modified covariance matrix corresponding to PFAM family PF00028 are transformed when computing the regularized inverse.

\subsection{Pseudocounts}
\label{subsec:pseudocounts}
The pseudocounts parameter $\lambda$ in Equation \ref{eq:pseudo} accounts for non-observed pairs of amino acids. Following \cite{morcos2011direct}, we set $\lambda=44$. However, we observed that the performance gain is small compared to setting $\lambda=0$ (see Table S5). In fact, pseudocounts have a similar regularizing effect to the parameter $\eta'$ described in the previous section, and probably for this reason the contribution of $\lambda$ is minimal. 

%\end{methods}

%%%%%%%%%%%%%%%%%%%%%%%%%%%%%%%%%%%%%%%%%%%%%%%%%%%%%%%%%%%%%%%%%%%%%%%%%%%%%%%%%%%%%
%
%     please remove the " % " symbol from \centerline{\includegraphics{fig01.eps}}
%     as it may ignore the figures.
%
%%%%%%%%%%%%%%%%%%%%%%%%%%%%%%%%%%%%%%%%%%%%%%%%%%%%%%%%%%%%%%%%%%%%%%%%%%%%%%%%%%%%%%

%%% TABLES %%%%%%%%%%%%%%%%%%%%%

%%%%%%% 15 families - C-Beta

\begin{table*}[!b]
%\processtable{Positive predictive value for partial correlation scores in the set of 15 families from \cite{marks2011protein}, measured on C$\beta$ carbons \label{table:results15}}
\caption{Positive predictive value for partial correlation scores in the set of 15 families from \cite{marks2011protein}, measured on C$\beta$ carbons \label{table:results15}}
%{
\centering
{\tiny
\begin{tabular}{p{0.7cm}p{0.3cm}p{0.2cm}p{0.3cm}c|ccc|ccc|ccc|ccc}
\toprule
\multirow{2}*{PFAM} & \multirow{2}*{PDB} & \multirow{2}*{L} & \multirow{2}*{M} &  \multirow{2}*{$\Meff$} &
\multicolumn{3}{c|}{top $L/5$} & \multicolumn{3}{c|}{top $L/3$} & \multicolumn{3}{c|}{top $L/2$} & \multicolumn{3}{c}{top $L$} \\ 
&&&&& RLS & PSICOV & DCA & RLS & PSICOV & DCA & RLS & PSICOV & DCA & RLS & PSICOV & DCA \\ \midrule

PF00001 & 1hzx & $257$ & 23711 & 3610         & 0.608 & 0.549 & 0.417 & 0.565 & 0.494 & 0.333 & 0.516 & 0.453 & 0.281 & 0.374 & 0.300 & 0.222 \\
PF00013 & 1wvn & $56$ & 5298 & 1710            & 0.909 & 0.818 & 0.909 & 0.889 & 0.778 & 0.944 & 0.857 & 0.643 & 0.926 & 0.679 & 0.518 & 0.655 \\
PF00014 & 5pti & $48$ & 1743 & 1019     & 0.889 & 0.889 & 1.000 & 0.625 & 0.750 & 0.750 & 0.750 & 0.750 & 0.750 & 0.625 & 0.542 & 0.604 \\
PF00018 & 2hda & $45$ & 3610 & 1529           & 0.889 & 0.889 & 0.889 & 0.933 & 0.933 & 0.933 & 0.909 & 0.864 & 0.818 & 0.667 & 0.622 & 0.667 \\
PF00028 & 2o72 & $91$ & 8828 & 3912         & 0.944 & 0.944 & 0.823 & 0.933 & 0.933 & 0.862 & 0.889 & 0.889 & 0.909 & 0.802 & 0.791 & 0.809 \\
PF00059 & 2it6 & $107$ & 4067 & 1949       & 0.810 & 0.810 & 0.850 & 0.771 & 0.743 & 0.765 & 0.774 & 0.736 & 0.726 & 0.598 & 0.561 & 0.628 \\
PF00071 & 5p21 & $161$ & 8395 & 1868             & 0.844 & 0.813 & 0.774 & 0.811 & 0.755 & 0.811 & 0.763 & 0.725 & 0.798 & 0.652 & 0.627 & 0.604 \\
PF00072 & 1e6k & $108$ & 45821 & 24642 & 0.810 & 0.857 & 0.857 & 0.778 & 0.861 & 0.857 & 0.759 & 0.815 & 0.793 & 0.685 & 0.694 & 0.755 \\
PF00075 & 1f21 & $126$ & 8131 & 960         & 0.760 & 0.680 & 0.680 & 0.738 & 0.714 & 0.610 & 0.714 & 0.667 & 0.581 & 0.579 & 0.524 & 0.520 \\
PF00076 & 1g2e & $70$ & 18491 & 6849          & 0.786 & 0.786 & 0.786 & 0.870 & 0.783 & 0.870 & 0.857 & 0.829 & 0.914 & 0.757 & 0.743 & 0.814 \\
PF00085 & 1rqm & $100$ & 9095 & 3814     & 0.800 & 0.700 & 0.842 & 0.758 & 0.758 & 0.688 & 0.740 & 0.700 & 0.729 & 0.650 & 0.570 & 0.639 \\
PF00089 & 3tgi & $217$ & 12909 & 4296        & 0.977 & 0.837 & 0.951 & 0.847 & 0.750 & 0.855 & 0.796 & 0.713 & 0.789 & 0.724 & 0.595 & 0.673 \\
PF00254 & 1r9h & $95$ & 5269 & 1759          & 0.895 & 0.895 & 0.889 & 0.807 & 0.839 & 0.936 & 0.851 & 0.745 & 0.891 & 0.674 & 0.579 & 0.667 \\
PF00307 & 1bkr & $107$ & 2751 & 873              & 0.714 & 0.333 & 0.700 & 0.629 & 0.229 & 0.697 & 0.547 & 0.264 & 0.500 & 0.449 & 0.252 & 0.455 \\
PF00486 & 1odd & $74$ & 13702 & 5452  & 0.786 & 0.786 & 0.714 & 0.708 & 0.583 & 0.667 & 0.649 & 0.541 & 0.583 & 0.527 & 0.419 & 0.521 \\ \midrule

& & & \multicolumn{2}{c}{Average} & 
\bf 0.828 & 0.772 & 0.805 & \bf 0.777 & 0.727 & 0.772 & \bf 0.758 & 0.689 & 0.732 & \bf 0.629 & 0.556 & 0.615 \\ 

%\multicolumn{5}{c}{Best in} & 
%$\bf 10 / 15$ & $8 / 15$ & $\bf 10/15$ & $8 / 15$ & $\bf 12/15$ & $5 / 15$ & $\bf 10/15$ & $ 6/15$ \\ \botrule
\end{tabular}
}
%}{}
\end{table*}

%%%%%%% 22 families - Ekeberg set %%%%%%

\begin{table*}[!b]
%\processtable{Positive predictive value for partial correlation scores in a set of 22 families from \cite{ekeberg2013improved}, measured on C$\beta$ carbons \label{table:results22}}
\caption{Positive predictive value for partial correlation scores in a set of 22 families from \cite{ekeberg2013improved}, measured on C$\beta$ carbons \label{table:results22}}
%{
\centering
{\tiny
\begin{tabular}{p{0.7cm}p{0.3cm}p{0.2cm}p{0.3cm}c|ccc|ccc|ccc|ccc}
\toprule
\multirow{2}*{PFAM} & \multirow{2}*{PDB} & \multirow{2}*{L} & \multirow{2}*{M} &  \multirow{2}*{$\Meff$} &
\multicolumn{3}{c|}{top $L/5$} & \multicolumn{3}{c|}{top $L/3$} & \multicolumn{3}{c|}{top $L/2$} & \multicolumn{3}{c}{top $L$} \\ 
&&&&& RLS & PSICOV & DCA & RLS & PSICOV & DCA & RLS & PSICOV & DCA & RLS & PSICOV & DCA \\ \midrule

PF00006 & 2r9v & 213 & 13515 & 161  & 0.429 & 0.310 & 0.634 & 0.366 & 0.282 & 0.638 & 0.330 & 0.255 & 0.625 & 0.239 & 0.202 & 0.450 \\
PF00011 & 2bol & 100 & 6600 & 2622  & 0.800 & 0.750 & 0.875 & 0.788 & 0.576 & 0.815 & 0.660 & 0.520 & 0.756 & 0.530 & 0.450 & 0.578 \\
PF00017 & 1o47 & 75 & 4911 & 1361   & 0.800 & 0.733 & 0.733 & 0.760 & 0.680 & 0.680 & 0.676 & 0.568 & 0.622 & 0.493 & 0.440 & 0.547 \\
PF00025 & 1fzq & 172 & 4384 & 659   & 0.647 & 0.559 & 0.500 & 0.544 & 0.509 & 0.518 & 0.454 & 0.407 & 0.482 & 0.326 & 0.308 & 0.324 \\
PF00026 & 3er5 & 317 & 4326 & 1445  & 0.810 & 0.762 & 0.800 & 0.752 & 0.667 & 0.733 & 0.753 & 0.658 & 0.717 & 0.596 & 0.489 & 0.586 \\
PF00027 & 3fhi & 89 & 16544 & 8545  & 1.000 & 1.000 & 1.000 & 1.000 & 0.966 & 1.000 & 0.977 & 0.909 & 1.000 & 0.843 & 0.809 & 0.888 \\
PF00032 & 1zrt & 80 & 30643 & 702   & 0.750 & 0.688 & 0.500 & 0.615 & 0.615 & 0.462 & 0.500 & 0.450 & 0.400 & 0.263 & 0.263 & 0.238 \\
PF00035 & 1o0w & 64 & 4254 & 1958   & 0.917 & 0.917 & 1.000 & 0.952 & 0.857 & 0.952 & 0.906 & 0.688 & 0.903 & 0.609 & 0.531 & 0.714 \\
PF00041 & 1bqu & 82 & 33103 & 11568 & 0.688 & 0.938 & 0.813 & 0.741 & 0.704 & 0.731 & 0.659 & 0.610 & 0.650 & 0.585 & 0.524 & 0.588 \\
PF00043 & 6gsu & 93 & 8822 & 2653   & 0.556 & 0.500 & 0.563 & 0.452 & 0.419 & 0.500 & 0.348 & 0.370 & 0.429 & 0.258 & 0.237 & 0.321 \\
PF00044 & 1crw & 144 & 8047 & 838   & 0.893 & 0.857 & 0.929 & 0.813 & 0.833 & 0.870 & 0.722 & 0.736 & 0.814 & 0.618 & 0.611 & 0.700 \\
PF00046 & 2vi6 & 57 & 10117 & 1376  & 0.455 & 0.636 & 0.500 & 0.526 & 0.474 & 0.471 & 0.429 & 0.393 & 0.385 & 0.281 & 0.263 & 0.264 \\
PF00056 & 1a5z & 137 & 5673 & 813   & 0.852 & 0.704 & 0.926 & 0.844 & 0.578 & 0.867 & 0.765 & 0.529 & 0.821 & 0.569 & 0.416 & 0.615 \\
PF00073 & 2r06 & 170 & 11796 & 136  & 0.353 & 0.206 & 0.094 & 0.250 & 0.125 & 0.111 & 0.224 & 0.094 & 0.111 & 0.165 & 0.112 & 0.117 \\
PF00081 & 3bfr & 76 & 4017 & 619    & 0.600 & 0.467 & 0.429 & 0.520 & 0.480 & 0.458 & 0.421 & 0.368 & 0.514 & 0.263 & 0.290 & 0.351 \\
PF00084 & 1elv & 53 & 13038 & 5457  & 0.700 & 0.600 & 0.667 & 0.706 & 0.471 & 0.625 & 0.692 & 0.577 & 0.625 & 0.472 & 0.491 & 0.521 \\
PF00091 & 2r75 & 215 & 10854 & 321  & 0.442 & 0.488 & 0.706 & 0.352 & 0.380 & 0.614 & 0.299 & 0.355 & 0.547 & 0.265 & 0.261 & 0.407 \\
PF00092 & 1atz & 177 & 7070 & 1799  & 0.714 & 0.771 & 0.758 & 0.695 & 0.780 & 0.582 & 0.716 & 0.727 & 0.566 & 0.588 & 0.593 & 0.530 \\
PF00105 & 1gdc & 64 & 3595 & 663    & 0.667 & 0.667 & 0.833 & 0.619 & 0.619 & 0.619 & 0.500 & 0.500 & 0.581 & 0.344 & 0.375 & 0.444 \\
PF00107 & 1a71 & 128 & 24755 & 7021 & 0.880 & 0.840 & 0.913 & 0.833 & 0.810 & 0.790 & 0.750 & 0.781 & 0.776 & 0.633 & 0.602 & 0.690 \\
PF00108 & 3goa & 261 & 9701 & 1680  & 0.750 & 0.731 & 0.778 & 0.713 & 0.644 & 0.733 & 0.685 & 0.685 & 0.735 & 0.605 & 0.563 & 0.678 \\
PF00111 & 1a70 & 74 & 10559 & 2848  & 0.643 & 0.571 & 0.615 & 0.542 & 0.458 & 0.409 & 0.432 & 0.405 & 0.394 & 0.405 & 0.378 & 0.303 \\   \midrule
& & & \multicolumn{2}{c}{Average} & 
0.697 & 0.668 & \bf 0.708 &\bf 0.654 & 0.588 & 0.644 & 0.586 & 0.527 & \bf 0.612 & 0.452 & 0.418 & \bf 0.493\\ 

%\multicolumn{5}{c}{Best in} & 
%$\bf 8 / 10$ & $3 / 10$ & $\bf 9/10$ & $2 / 10$ & $\bf 9/10$ & $2 / 10$ & $\bf 8/10$ & $ 4/10$ \\ \botrule
\end{tabular}
}
%}{}
\end{table*}

%%%%%%% 10 families - Cbeta %%%%%%

\begin{table*}[!b]
%\processtable{Positive predictive value for partial correlation scores in a new set of 10 families, measured on C$\beta$ carbons \label{table:results10}}
\caption{Positive predictive value for partial correlation scores in a new set of 10 families, measured on C$\beta$ carbons \label{table:results10}}
%{
\centering
{\tiny
\begin{tabular}{p{0.7cm}p{0.3cm}p{0.2cm}p{0.3cm}c|ccc|ccc|ccc|ccc}
\toprule
\multirow{2}*{PFAM} & \multirow{2}*{PDB} & \multirow{2}*{L} & \multirow{2}*{M} &  \multirow{2}*{$\Meff$} &
\multicolumn{3}{c|}{top $L/5$} & \multicolumn{3}{c|}{top $L/3$} & \multicolumn{3}{c|}{top $L/2$} & \multicolumn{3}{c}{top $L$} \\ 
&&&&& RLS & PSICOV & DCA & RLS & PSICOV & DCA & RLS & PSICOV & DCA & RLS & PSICOV & DCA \\ \midrule

PF00240 & 4k7s & $69$  & 5382 & 1809         & 0.923 & 0.846 & 0.846 & 0.870 & 0.783 & 0.905 & 0.765 & 0.677 & 0.875 & 0.580 & 0.449 & 0.600 \\
PF00390 &  1qr6 & $182$ & 3865 & 182          & 0.583 & 0.583 & 0.314 & 0.467 & 0.450 & 0.373 & 0.396 & 0.330 & 0.292 & 0.247 & 0.242 & 0.225 \\
PF00482 &  2vmb & $124$ & 8131 & 3673      & 0.833 & 0.583 & 0.857 & 0.829 & 0.537 & 0.800 & 0.742 & 0.403 & 0.808 & 0.460 & 0.282 & 0.590 \\
PF00793 &  3stc & $256$ & 4878 & 506          & 0.628 & 0.294 & 0.556 & 0.494 & 0.294 & 0.513 & 0.414 & 0.234 & 0.474 & 0.289 & 0.184 & 0.411 \\
PF01026 &  1xwy & $244$ & 5507 & 1907     & 0.833 & 0.771 & 0.792 & 0.815 & 0.741 & 0.775 & 0.746 & 0.664 & 0.758 & 0.648 & 0.562 & 0.679 \\
PF01791 &  3myp & $233$ & 3865 & 673       & 0.261 & 0.304 & 0.273 & 0.286 & 0.286 & 0.288 & 0.250 & 0.267 & 0.255 & 0.206 & 0.197 & 0.223 \\
PF01807 &  2au3 & $92$  & 2773 & 837        & 0.778 & 0.833 & 0.824 & 0.800 & 0.767 & 0.862 & 0.826 & 0.848 & 0.886 & 0.707 & 0.707 & 0.753 \\
PF03484 &  1b7y & $62$  & 2669 & 1265       & 1.000 & 0.750 & 1.000 & 0.950 & 0.850 & 0.900 & 0.903 & 0.774 & 0.933 & 0.726 & 0.629 & 0.754 \\
PF12704 &  3ftj & $231$ & 14338 & 7373      & 0.891 & 0.717 & 0.919 & 0.844 & 0.546 & 0.853 & 0.800 & 0.452 & 0.804 & 0.597 & 0.286 & 0.643 \\
PF13305 &  3on2 & $80$  & 1583 & 964         & 0.500 & 0.438 & 0.400 & 0.308 & 0.269 & 0.360 & 0.225 & 0.200 & 0.263 & 0.163 & 0.163 & 0.145 \\ \midrule
& & & \multicolumn{2}{c}{Average} & 
\bf 0.723 & 0.612 & 0.678 & \bf 0.666 & 0.552 & 0.663 & 0.607 & 0.485 & \bf 0.635 & 0.462 & 0.370 & \bf 0.502 \\ 
%\multicolumn{5}{c}{Best in} & 
%$\bf 8 / 10$ & $3 / 10$ & $\bf 9/10$ & $2 / 10$ & $\bf 9/10$ & $2 / 10$ & $\bf 8/10$ & $ 4/10$ \\ \botrule
\end{tabular}
}
%}{}
\end{table*}

\section{Results and Conclusion}

The method and estimation of parameters described above were first applied to the 15 families studied in Marks et al.\ \cite{marks2011protein}. Performance was estimated in terms of the fraction of correct predicted contacts among the $L/5$, $L/3$, $L/2$ and $L$ pairs with highest $P_{\text{APC}}$ score, where $L$ is the length of the alignment. We considered as a true contact a pair of amino acids with beta-carbons (C$\beta$) with distance $< 8$ {\AA} and at least 5 residues apart along the length of the protein. We find that on these 15 families the optimal value for the regularization parameter is around $\eta'=1000$ (see Table S2). 

Table \ref{table:results15} compares the performance of the RLS algorithm with PSICOV version 1.11 \cite{jones2012psicov} and the plmDCA method \cite{ekeberg2013improved}. We observe that on this set our method outperforms both PSICOV and DCA on the majority of families. Additionally, in Table S6 in shown the positive predictive value of the methods with respect to short range ($5 \leq i - j \leq 11$), medium range ($12 \leq i - j \leq 23$) and long range ($>$ 23) interactions for the prediction of the top $L/5$ contacts.

Next, we applied the three methods using the same parameters on an additional set of families from \cite{ekeberg2013improved}. This set partially overlapped with the families from  \cite{marks2011protein} studied in Table \ref{table:results15}, and after removing the duplicates we are left with a set of 22 families. Table \ref{table:results22} shows the positive predictive value of the RLS algorithm compared to PSICOV and plmDCA. On this set plmDCA obtains the highest average performance on three out of four ranking categories. However, note that plmDCA was optimized on this set of families, so there may be a bias in favor of this method.

Finally, we constructed an independent set of 10 families, selected randomly from Pfam with the only condition of containing at least 1,000 unique sequences. This set had not been used in the optimization of the algorithms, therefore constitutes a fair ground for comparison. The results (see Table \ref{table:results10}) show that RLS outperforms the other two methods in the $L/5$ and $L/3$ subsets, whereas DCA obtains highest average performance for the prediction of the top $L/2$ and $L$ contacts.

A comparison of the running times of the two best algorithms (RLS and DCA) shows that RLS is at least one order of magnitude faster than DCA (Table \ref{table:timings}). Note that we used the latest fast version of plmDCA \cite{ekeberg2014fast}, termed ``asymmetric plmDCA'', which improves considerably on previous pseudolikelihood methods in terms of speed. Our fast regularized inversion of the covariance matrix allows contact prediction on hundreds of amino acids-long domains in a matter of seconds, practically removing the limitations on the length of proteins that can be analyzed. In fact, the slowest step in the predictions is sequence weighting (not accounted in Table \ref{table:timings} for either method), in particular the $K^3$-based weighting can be slow for very large families, but we showed that a simpler and faster weighting strategy does not affect too dramatically the performance (Table S4).

\begin{table}[!t]
%\processtable{Running times for the RLS and plmDCA (asymmetric) algorithms, in seconds \label{table:timings}. Both algorithms were run on the same computer, using a fixed number of 4 processors.}
\caption{Running times for the RLS and plmDCA (asymmetric) algorithms, in seconds. Both algorithms were run on the same computer, using a fixed number of 4 processors. \label{table:timings}}
%{
\centering
{\scriptsize
\begin{tabular}{c|c|c|c}
Pfam & Length & RLS & plmDCA \\
\hline
PF00001 & 257 & 21.9 & 2429.5 \\
PF00013 & 56 & 0.3 & 25.8 \\
PF00014 & 48 & 0.2 & 9.1 \\
PF00018 & 45 & 0.2 & 11.9 \\
PF00028 & 91 & 1.2 & 103.5 \\
PF00059 & 107 & 1.5 & 73.2 \\
PF00071 & 161 & 5.0 & 319.0 \\
PF00072 & 108 & 5.0 & 686.0 \\
PF00075 & 126 & 2.7 & 177.9 \\
PF00076 & 70 & 1.0 & 108.1 \\
PF00085 & 100 & 1.6 & 128.7 \\
PF00089 & 217 & 11.5 & 1029.3 \\
PF00254 & 95 & 1.2 & 69.8 \\
PF00307 & 107 & 1.4 & 49.0 \\
PF00486 & 74 & 0.9 & 104.6 \\
\end{tabular}
}
%}{}
\end{table}

In general, we observed that the performance depends on the effective number of sequences $\Meff$ in the alignment. For instance, families PF00390 or PF00793 are composed of several thousand sequences, but they contain much redundancy, which brings down $\Meff$ to a few hundred units. Roughly, it appears that at least 1000 non-redundant sequences ($\Meff > 1000$) are necessary to achieve a reasonable precision for contact prediction. This is in agreement with previous estimates \cite{marks2011protein, kamisetty2013assessing} which place this number at about $5L$, where $L$ is the length of the alignment.

In conclusion, we demonstrated how our simple regularization scheme for covariance matrix inversion allows the fast and accurate  prediction of residue-residue contacts. Currently, a major restriction to this kind of approach is the fairly high number of non-redundant sequences required to infer coevolution from a multiple sequence alignment, limiting the application to a relatively small subset of Pfam.  However, as the number of protein sequences deposited in public databases increases, we expect a larger number of protein families to become accessible to our analysis, as well as improved performance on those that are already accessible.

\section*{Acknowledgement}

This work received funding from City University of Hong Kong grants RGC \#9380050 and \#9041544. Work by S.L. was partially supported by Ministerio de Ciencia, Tecnolog\'ia e Innovaci\'on Productiva, Argentina. 

%\bibliographystyle{natbib}
%\bibliographystyle{achemnat}
%\bibliographystyle{plainnat}
%\bibliographystyle{abbrv}
%\bibliographystyle{bioinformatics}
%
%\bibliographystyle{plain}
%
%\bibliography{Document}

\bibliographystyle{plain}

\bibliography{mybib_plain}

\begin{thebibliography}{10}

\bibitem{anfinsen1973principles}
Christian~B Anfinsen.
\newblock Principles that govern the folding of protein chains.
\newblock {\em Science}, 181(4096):223--230, 1973.

\bibitem{balakrishnan2011learning}
Sivaraman Balakrishnan, Hetunandan Kamisetty, Jaime~G Carbonell, Su-In Lee, and
  Christopher~James Langmead.
\newblock Learning generative models for protein fold families.
\newblock {\em Proteins: Structure, Function, and Bioinformatics},
  79(4):1061--1078, 2011.

\bibitem{burger2010disentangling}
Lukas Burger and Erik van Nimwegen.
\newblock Disentangling direct from indirect co-evolution of residues in
  protein alignments.
\newblock {\em PLoS computational biology}, 6(1):e1000633, 2010.

\bibitem{buslje2009correction}
Cristina~Marino Buslje, Javier Santos, Jose~Maria Delfino, and Morten Nielsen.
\newblock Correction for phylogeny, small number of observations and data
  redundancy improves the identification of coevolving amino acid pairs using
  mutual information.
\newblock {\em Bioinformatics}, 25(9):1125--1131, 2009.

\bibitem{de2013emerging}
David de~Juan, Florencio Pazos, and Alfonso Valencia.
\newblock Emerging methods in protein co-evolution.
\newblock {\em Nature Reviews Genetics}, 14(4):249--261, 2013.

\bibitem{dill2012protein}
Ken~A Dill and Justin~L MacCallum.
\newblock The protein-folding problem, 50 years on.
\newblock {\em Science}, 338(6110):1042--1046, 2012.

\bibitem{dunn2008mutual}
Stanley~D Dunn, Lindi~M Wahl, and Gregory~B Gloor.
\newblock Mutual information without the influence of phylogeny or entropy
  dramatically improves residue contact prediction.
\newblock {\em Bioinformatics}, 24(3):333--340, 2008.

\bibitem{ekeberg2014fast}
Magnus Ekeberg, Tuomo Hartonen, and Erik Aurell.
\newblock Fast pseudolikelihood maximization for direct-coupling analysis of
  protein structure from many homologous amino-acid sequences.
\newblock {\em arXiv preprint arXiv:1401.4832}, 2014.

\bibitem{ekeberg2013improved}
Magnus Ekeberg, Cecilia L{\"o}vkvist, Yueheng Lan, Martin Weigt, and Erik
  Aurell.
\newblock Improved contact prediction in proteins: using pseudolikelihoods to
  infer {P}otts models.
\newblock {\em Physical Review E}, 87(1):012707, 2013.

\bibitem{henikoff1992amino}
Steven Henikoff and Jorja~G Henikoff.
\newblock Amino acid substitution matrices from protein blocks.
\newblock {\em Proceedings of the National Academy of Sciences},
  89(22):10915--10919, 1992.

\bibitem{hoerl1962application}
Arthur~E Hoerl.
\newblock Application of ridge analysis to regression problems.
\newblock {\em Chemical Engineering Progress}, 58(3):54--59, 1962.

\bibitem{jones2012psicov}
David~T Jones, Daniel~WA Buchan, Domenico Cozzetto, and Massimiliano Pontil.
\newblock {PSICOV}: precise structural contact prediction using sparse inverse
  covariance estimation on large multiple sequence alignments.
\newblock {\em Bioinformatics}, 28(2):184--190, 2012.

\bibitem{kamisetty2013assessing}
Hetunandan Kamisetty, Sergey Ovchinnikov, and David Baker.
\newblock Assessing the utility of coevolution-based residue--residue contact
  predictions in a sequence and structure-rich era.
\newblock {\em Proceedings of the National Academy of Sciences},
  110(39):15674--15679, 2013.

\bibitem{lapedes1999correlated}
Alan~S Lapedes, Bertrand~G Giraud, LonChang Liu, and Gary~D Stormo.
\newblock Correlated mutations in models of protein sequences: phylogenetic and
  structural effects.
\newblock {\em Lecture Notes-Monograph Series}, pp. 236--256, 1999.

\bibitem{marks2011protein}
Debora~S Marks, Lucy~J Colwell, Robert Sheridan, Thomas~A Hopf, Andrea Pagnani,
  Riccardo Zecchina, and Chris Sander.
\newblock Protein {3D} structure computed from evolutionary sequence variation.
\newblock {\em PLoS One}, 6(12):e28766, 2011.

\bibitem{marks2012protein}
Debora~S Marks, Thomas~A Hopf, and Chris Sander.
\newblock Protein structure prediction from sequence variation.
\newblock {\em Nature biotechnology}, 30(11):1072--1080, 2012.

\bibitem{morcos2011direct}
Faruck Morcos, Andrea Pagnani, Bryan Lunt, Arianna Bertolino, Debora~S Marks,
  Chris Sander, Riccardo Zecchina, Jos{\'e}~N Onuchic, Terence Hwa, and Martin
  Weigt.
\newblock Direct-coupling analysis of residue coevolution captures native
  contacts across many protein families.
\newblock {\em Proceedings of the National Academy of Sciences},
  108(49):E1293--E1301, 2011.

\bibitem{PFAM}
Marco Punta, Penny~C Coggill, Ruth~Y Eberhardt, Jaina Mistry, John Tate, Chris
  Boursnell, Ningze Pang, Kristoffer Forslund, Goran Ceric, Jody Clements,
  et~al.
\newblock The {P}fam protein families database.
\newblock {\em Nucleic acids research}, 40(D1):D290--D301, 2012.

\bibitem{rose2006backbone}
George~D Rose, Patrick~J Fleming, Jayanth~R Banavar, and Amos Maritan.
\newblock A backbone-based theory of protein folding.
\newblock {\em Proceedings of the National Academy of Sciences},
  103(45):16623--16633, 2006.

\bibitem{smale2013introduction}
Wen-Jun Shen, Hau-San Wong, Quan-Wu Xiao, Xin Guo, and Stephen Smale.
\newblock Introduction to the peptide binding problem of computational
  immunology: New results.
\newblock {\em Foundations of Computational Mathematics}, pp. 1--34, 2013.

\bibitem{tikhonov1943stability}
Andrey~Nikolayevich Tikhonov.
\newblock On the stability of inverse problems.
\newblock {\em Dokl. Akad. Nauk SSSR}, 39(5):195--198, 1943.

\bibitem{wang2013predicting}
Zhiyong Wang and Jinbo Xu.
\newblock Predicting protein contact map using evolutionary and physical
  constraints by integer programming.
\newblock {\em Bioinformatics}, 29(13):i266--i273, 2013.

\bibitem{weigt2009identification}
Martin Weigt, Robert~A White, Hendrik Szurmant, James~A Hoch, and Terence Hwa.
\newblock Identification of direct residue contacts in protein--protein
  interaction by message passing.
\newblock {\em Proceedings of the National Academy of Sciences}, 106(1):67--72,
  2009.

\end{thebibliography}

%%Supplementary material%%%
\clearpage
\includepdf[pages=1,landscape=false]{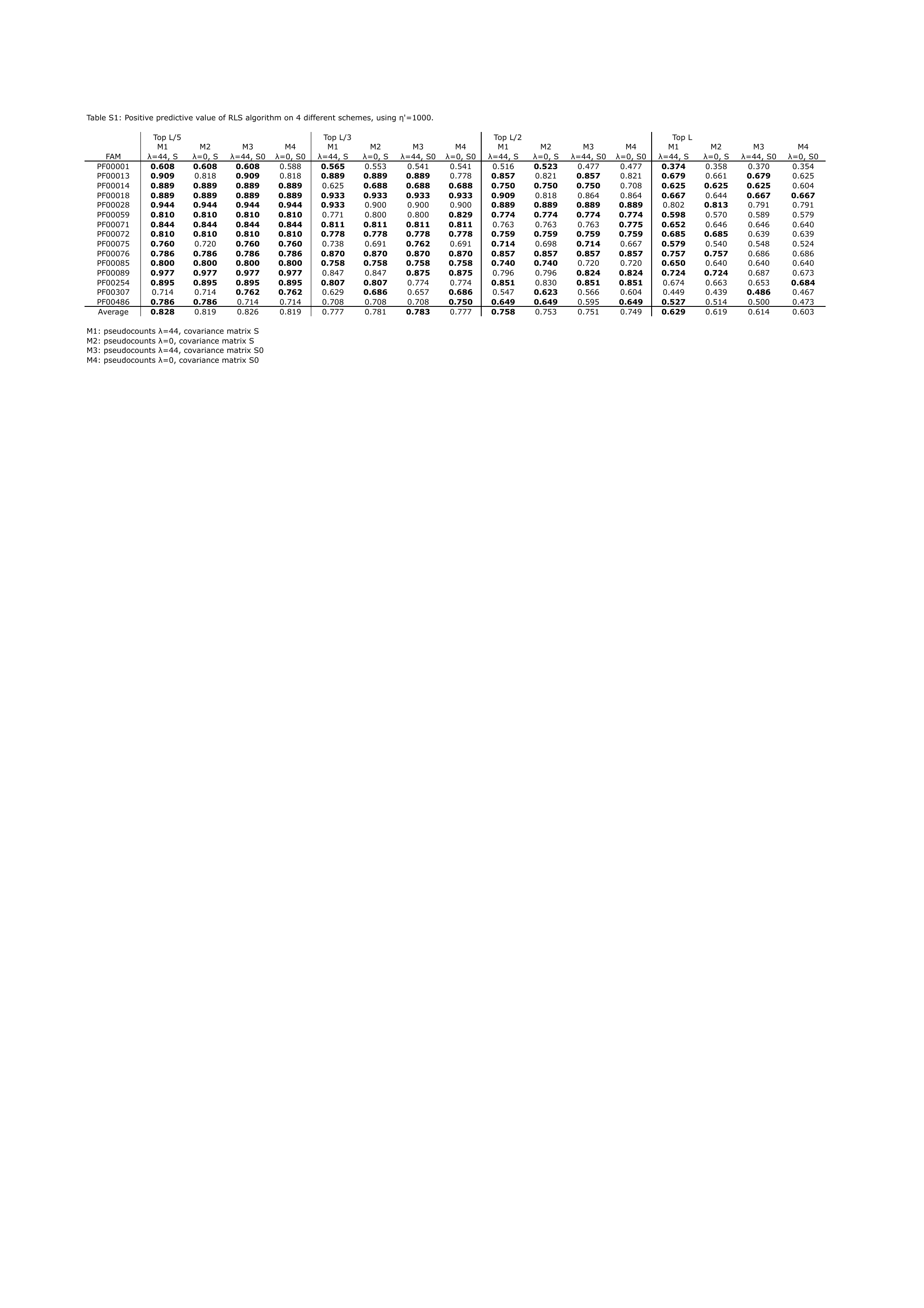}
\includepdf[pages=1,landscape=false]{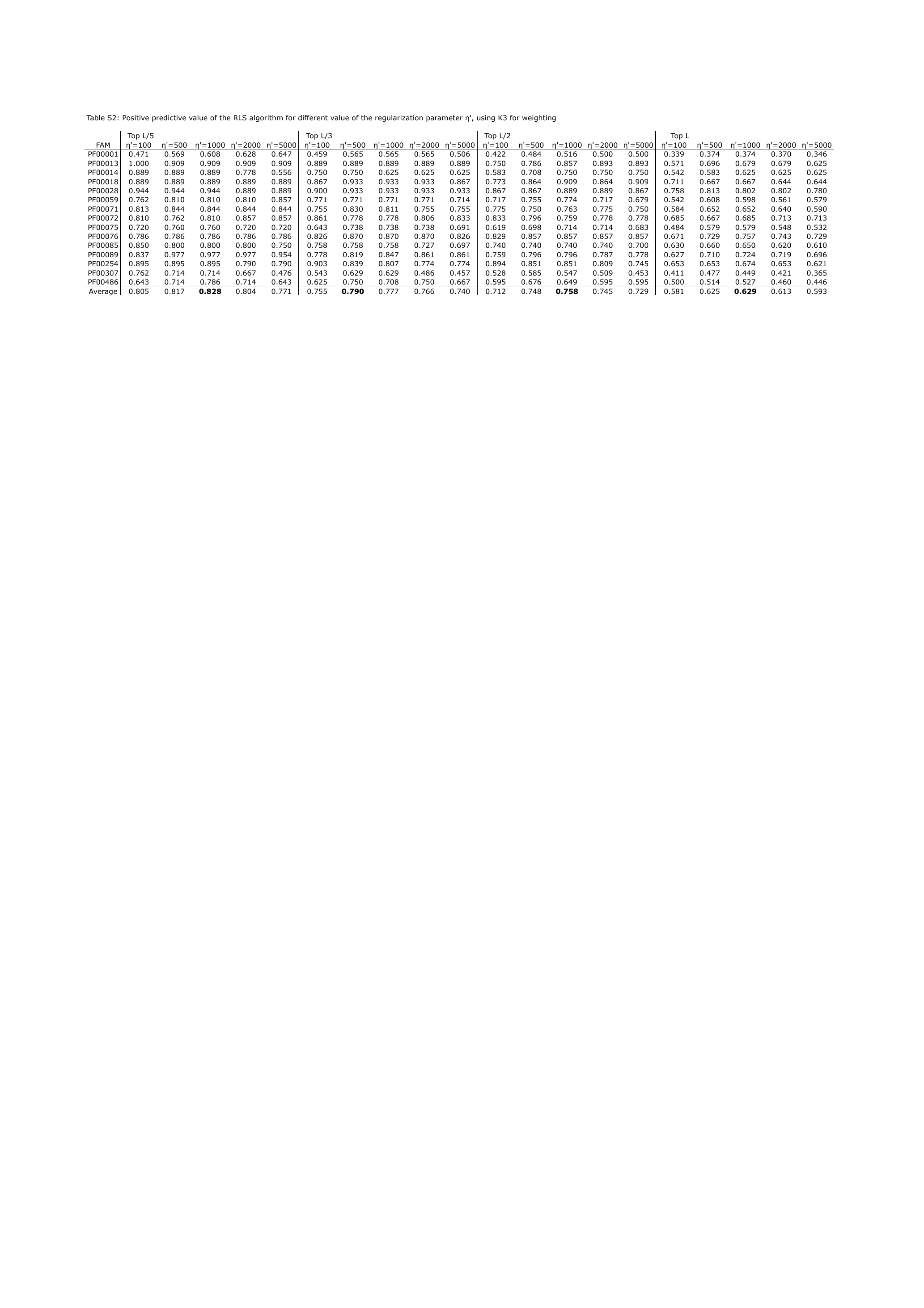}
\includepdf[pages=1,landscape=false]{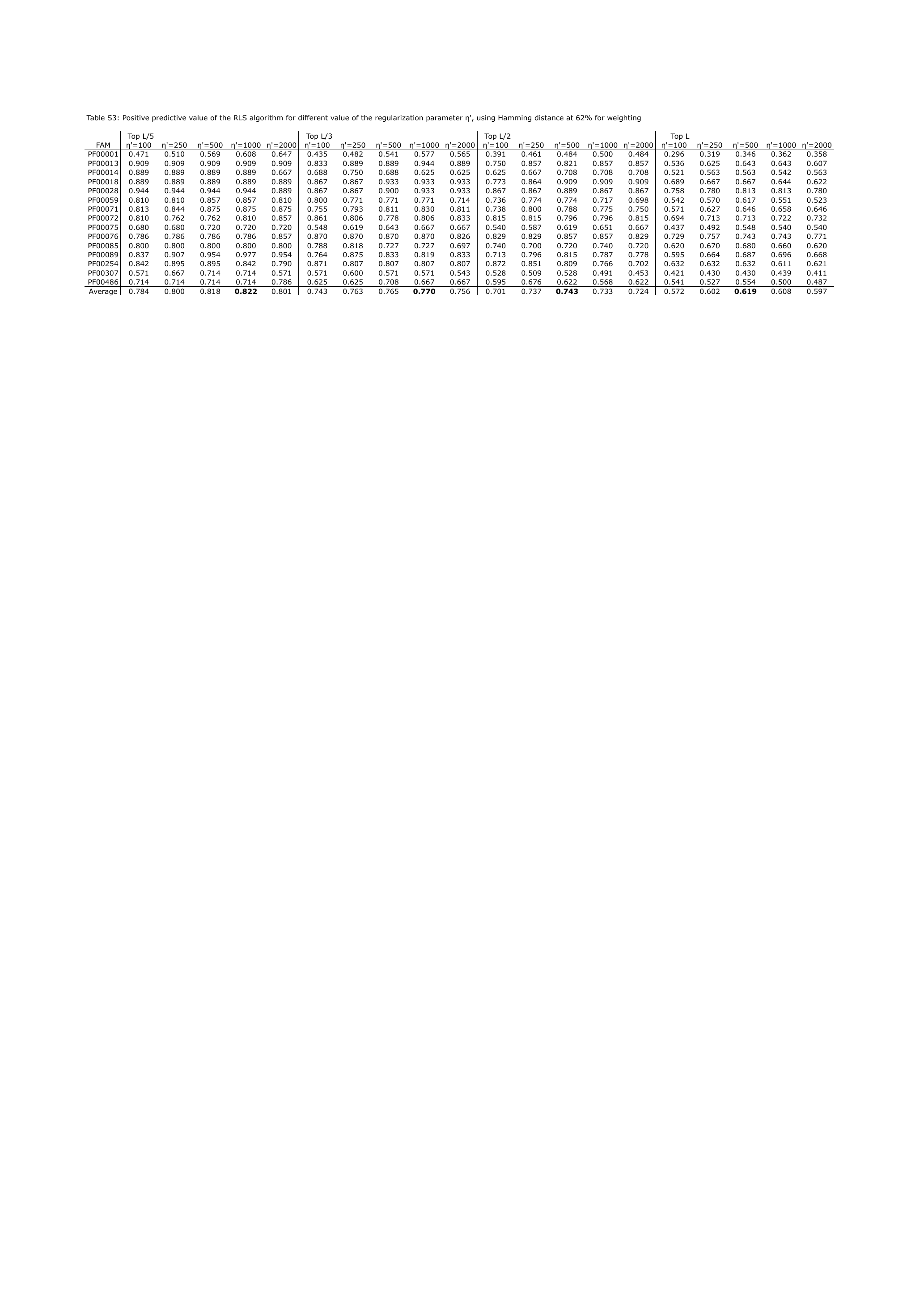}
\includepdf[pages=1,landscape=false]{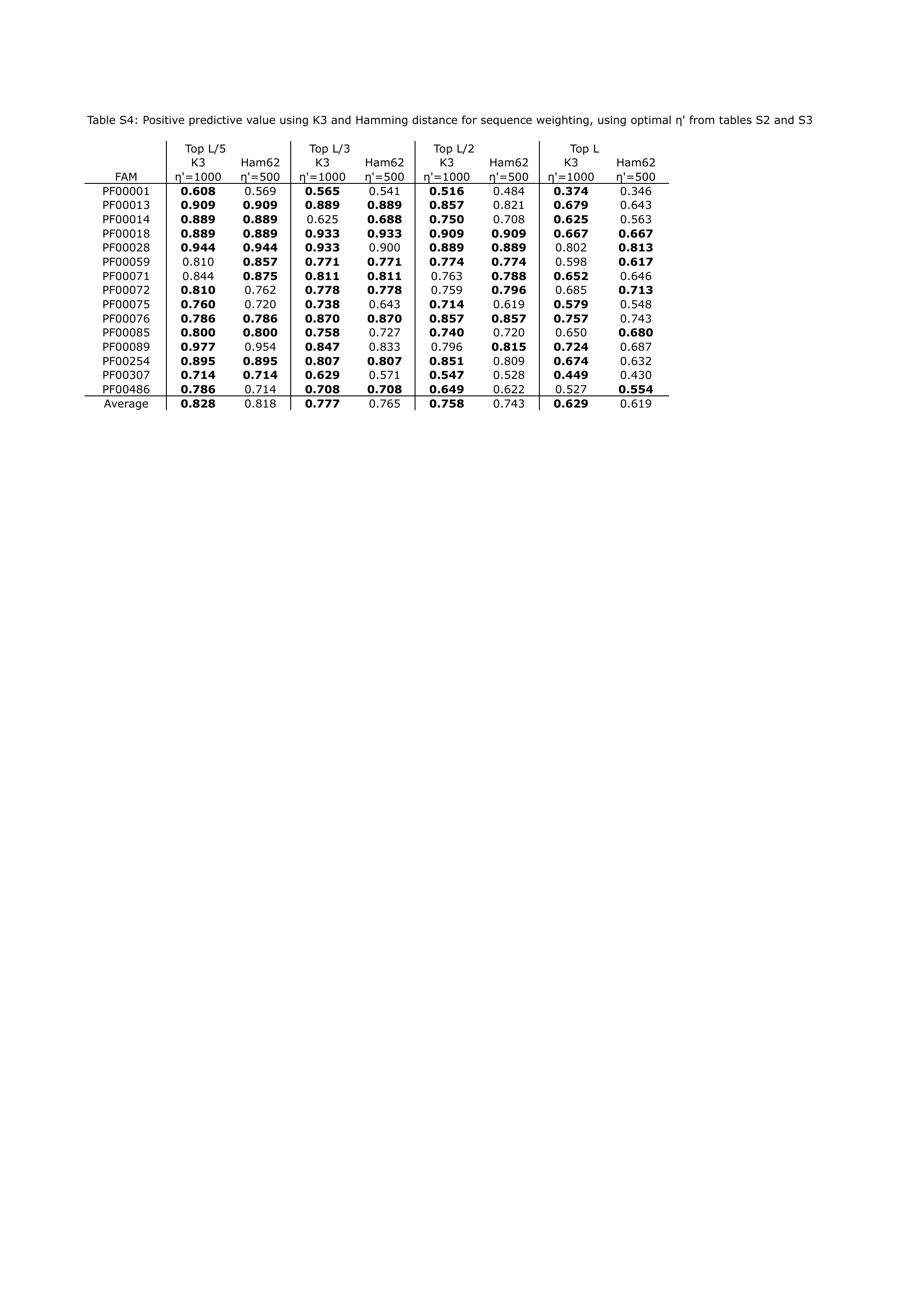}
\includepdf[pages=1,landscape=false]{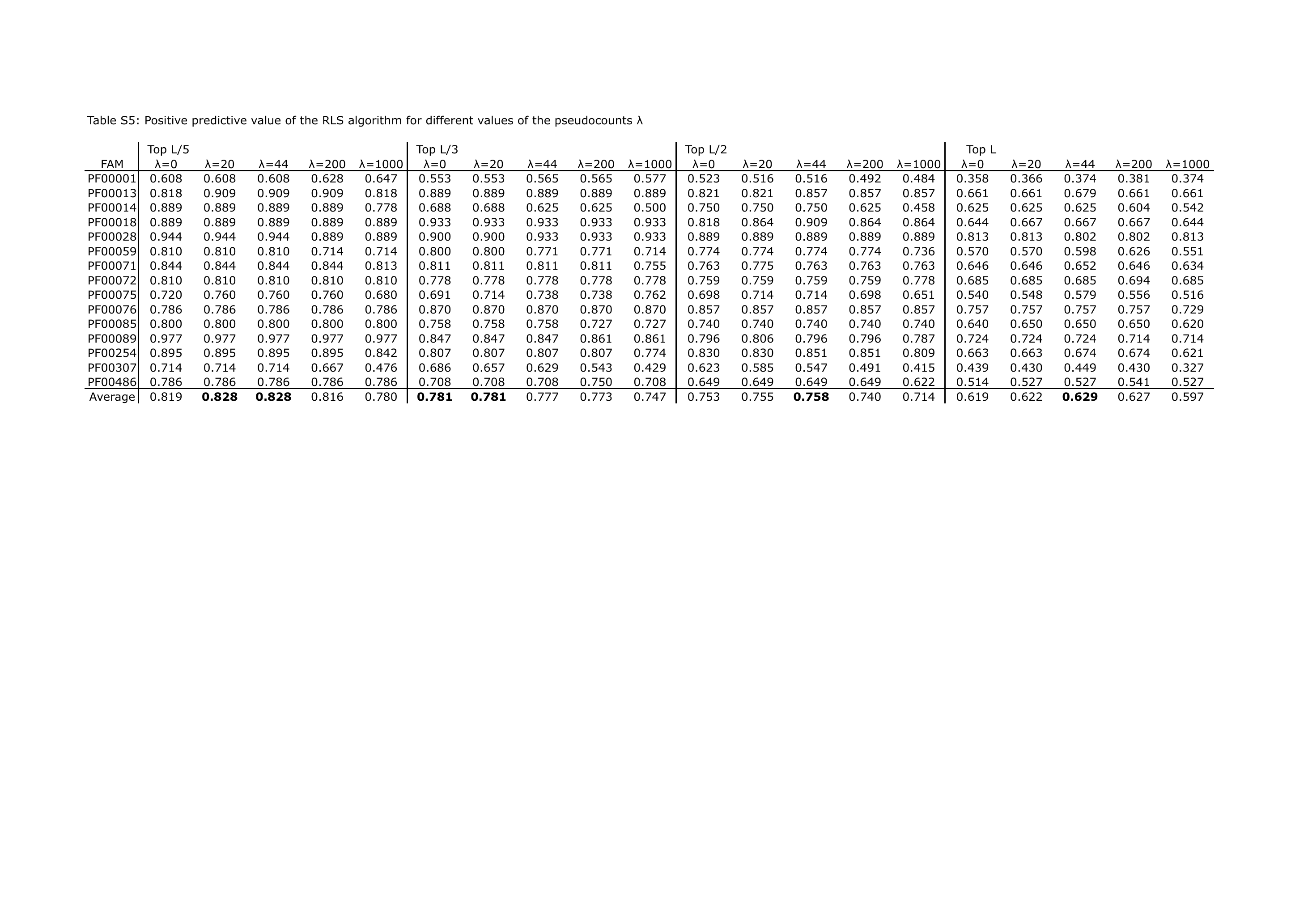}
\includepdf[pages=1,landscape=false]{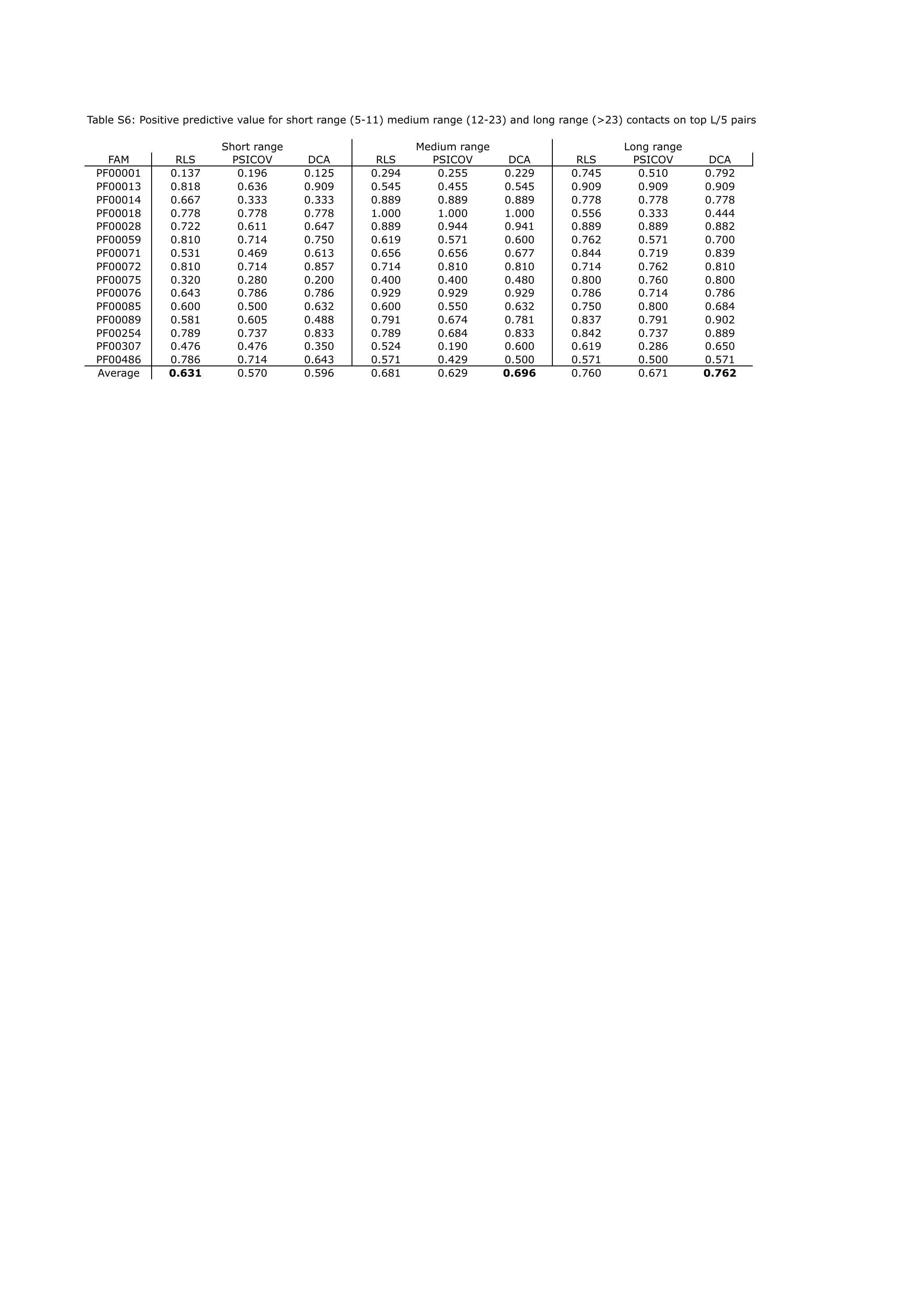}

\end{document}